\newtheorem{theorem}{Theorem}
\theoremstyle{definition}
\newtheorem{proposition}{Proposition}
\theoremstyle{remark}
\newtheorem{remark}{Remark}
\begin{document}

\begin{titlepage}
\begin{center}

    \Large\textbf{HeMiTo-dynamics: a characterisation of mammalian prion toxicity using non-dimensionalisation, linear stability and perturbation analyses}
       
    \normalsize
        
    \setcounter{footnote}{0}
    \setlength{\footnotemargin}{0.8em}
    {\normalsize Johannes G. Borgqvist\footnote{Corresponding author: \url{johborgq@chalmers.se}}\footnote{Mathematical Sciences, Chalmers University of Technology, Gothenburg, Sweden} and Christoffer Gretarsson Alexandersen\footnote{Mathematical Institute, University of Oxford, United Kingdom}}
    \setlength{\footnotemargin}{1.8em}
        
        
\abstract{Prion-like proteins play crucial parts in biological processes in organisms ranging from yeast to humans. For instance, many neurodegenerative diseases are believed to be caused by the production of prion-like proteins in neural tissue. As such, understanding the dynamics of prion-like protein production is a vital step toward treating neurodegenerative disease. Mathematical models of prion-like protein dynamics show great promise as a tool for predicting disease trajectories and devising better treatment strategies for prion-related diseases. Herein, we investigate a generic model for prion-like dynamics consisting of a class of non-linear ordinary differential equations (ODEs), establishing constraints through a linear stability analysis that enforce the expected properties of mammalian prion-like toxicity. Furthermore, we identify that prion toxicity evolves through three distinct phases for which we provide analytical descriptions using perturbation analyses. Specifically, prion-toxicity is initially characterised by the healthy phase, where the dynamics are dominated by the healthy form of prions, thereafter the system enters the mixed phase, where both healthy and toxic prions interact, and lastly, the system enters the toxic phase, where toxic prions dominate, and we refer to these phases as HeMiTo-dynamics. These findings hold the potential to aid researchers in developing precise mathematical models for prion-like dynamics, enabling them to better understand underlying mechanisms and devise effective treatments for prion-related diseases.     }
    
    
    \textbf{Keywords:}\\ Prions, non-linear ordinary differential equations (ODEs), non-dimensionalisation, linear stability analysis, perturbation analysis. \\      
\end{center}
\end{titlepage}

\section{Introduction}\label{sec:intro}

Prions are a class of proteins that are responsible for diseases such as Creutzfeldt-Jakob disease, kuru, and bovine spongiform encephalopathy~\cite{fraser_prions_2014, prusiner_molecular_1991}. Prions typically consist of several structural variants, some of which are healthy and some of which are harmful. The harmful, also called misfolded, prion variants can bind to healthy variants, misfolding them, and effectively converting them into toxic, misfolded protein. Misfolded prion proteins also form aggregates which eventually induce cell death. Moreover, many other proteins have been shown to have prion-like features such as having a toxic variant of itself capable of converting healthy protein into toxic protein. These prion-like proteins are believed to underlie neurodegenerative diseases in mammals~\cite{dugger_pathology_2017, brundin_prion-like_2010}. A notable example is that of amyloid beta playing a critical role in Alzheimer's disease~\cite{condello2020prion, dugger_pathology_2017}.  Longitudinal studies tracking biomarkers of Alzheimer's disease over time indicate that the toxic form increases like a sigmoid curve over time~\cite{jack2013tracking}. However, due to the long time scales of human neurodegeneration, it is difficult to investigate the biological mechanisms underlying prion dynamics in human brains. Prion-like proteins are also studied in fungi in general and baker's yeast in particular~\cite{sindi2009prion}, where such proteins often participate in cellular processes beneficial to the organism~\cite{cascarina_yeast_2014}. Prion-like protein dynamics in yeast operate on much shorter time scales and may thus be studied in greater detail. A powerful tool for inferring biological mechanisms underlying prion dynamics is mechanistic modelling, and hitherto, numerous mathematical models based on the prion hypothesis have been constructed and analysed~\cite{WEICKENMEIER2019_prion-model, Sindi17, thompson2020protein, thompson_role_2021, meisl_molecular_2020, greer_mathematical_2006} which, for instance, capture the sigmoid-like accumulation of toxic prions over time.

Prion-like dynamics can be described in different levels of detail. Note that we will refer to the non-aggregating protein variant as healthy and the aggregating variant as toxic. However, keep in mind that the aggregating variant could potentially be beneficial as opposed to pathological, in particular when discussing fungal prions. When healthy prion-like proteins are transformed into their toxic counterpart, they form aggregates of various sizes, which can be modelled explicitly~\cite{WEICKENMEIER2019_prion-model, greer_mathematical_2006}, describing the concentration evolution of each aggregate size. However, these models of aggregation are infinite-dimensional and depend on a large number of parameters, which makes numerical simulations and data fitting quite challenging.
The heterodimer model describes a simplification of prion-like replication, where aggregates are ignored. Instead, one only considers a concentration of healthy proteins and a concentration of toxic proteins, for which a reaction is transforming healthy into toxic protein~\cite{sindi2009prion, WEICKENMEIER2019_prion-model}. 
In the context of prion dynamics in the baker's yeast \textit{Saccharomyces cerevisiae}, a class of heterodimer models encapsulating other previously analysed models of mammalian prion toxicity have been recently proposed by Lemarre et al.~\cite{lemarre2020unifying}. Denoting the concentration of the healthy form by $H(t)$ and the concentration of the toxic form by $T(t)$ where the independent variable $t$ corresponds to time, we consider a heterodimer model with an arbitrary \textit{conversion function} $f$  given by
\begin{align}
    \dot{H}=&k_{1}-k_{2}H-k_{3}HTf(T)\,,\label{eq:ODE_H}\\
    \dot{T}=&k_{3}HTf(T)-k_{4}T\,,\label{eq:ODE_T}\\
    &H(t=0)=H_{0}\,,\quad{T}(t=0)=T_{0}\,,\label{eq:IC_H_and_T}
\end{align}
where time derivatives are denoted by dots, e.g. $\dot{H}=\mathrm{d}H/\mathrm{d}t$. Here, we assume that the concentrations of the two species $H$ and $T$ are measured in [nM], that time $t$ is measured in [years], and that the initial conditions in Eq. \eqref{eq:IC_H_and_T} are defined by positive constants $H_{0},T_{0}>0$. Then $k_{1}\,[\mathrm{nM}/\mathrm{year}]$ is the constant formation rate of the healthy species $H$, $k_{2}$ and $k_{4}$ are first order degradation rates of $H$ and $T$ measured in $[\mathrm{year}^{-1}]$, $k_{3}\,[\mathrm{nM}^{-1}\mathrm{year}^{-1}]$ is the second order conversion rate from the healthy to the toxic form and $f(T)$ is the conversion function which is a unitless bounded analytical differentiable function affecting the conversion rate. Mathematically speaking, both in the context of fungi and mammals, the choice of the conversion function $f$ must support the existence of two steady states, namely a so-called \textit{healthy steady state} (HSS) which is free of toxic prions and a so called \textit{toxic steady state} (TSS) characterised by toxic prions. Given the existence of two such steady states, the desired dynamical properties of prion models differ slightly for mammals and fungi.

When it comes to prion dynamics in \textit{Saccharomyces cerevisiae}, a desired mathematical property is that of \textit{bistability}~\cite{lemarre2020unifying}. Here, prions are inherited by daughter cells from mother cells after cell division, and it is the initial concentration of prion aggregates that determines whether the daughter cell will be functional or dysfunctional. Mathematically, this feature is captured by bistability implying that both the HSS and the TSS are stable simultaneously. In this case, it is the amount of prion aggregates that the daughter inherits after cell division corresponding to $T_{0}$ in Eq. \eqref{eq:IC_H_and_T} that determines whether the system evolves to the HSS or the TSS. A concrete example of a bistable model investigated by Lemarre et al.~\cite{lemarre2020unifying} is characterised by choosing the conversion function to a Hill function of the type $f(T)=K_{1}T^{n-1}/(K_{2}+T^{n})$ for some positive integer $n$ and some positive constants $K_{1},K_{2}$. 

For prion models of human and mammal neurodegeneration, the concentration of the toxic form of prions is initially low and thereafter it increases over time like a sigmoid curve~\cite{jack2013tracking}. Importantly, since the initial concentration of the toxic form $T_{0}$ in Eq. \eqref{eq:IC_H_and_T} is low, conditions on the parameters and the conversion function $f$ are typically enforced in order to capture the sigmoidal accumulation of the toxic form. Technically, these conditions result in stability critiera such that the HSS is a saddle point and the TSS is a stable node. A concrete example of such a prion model describing the dynamics of the biomarkers amyloid beta and tau proteins in the context of Alzheimer's disease was proposed by Thompson et al.~\cite{thompson2020protein} and it corresponds to the conversion function $f(T)=1$. Note that the initial concentrations of toxic prions in mammals are not a result of skewed protein distributions during cell replication, but rather variations in the propensity of brain regions to form prion-like pathology.

Nevertheless, given this latter prion model of human neurodegeneration, there are two fundamental unanswered questions. What choices of conversion function characterise models of mammalian prion toxicity where the HSS is a saddle point and the TSS is a stable node? Also, these stability criteria are ensured by means of linear stability analysis yielding knowledge about the long term dynamics of the models at hand. However, another unknown question is what features characterise the early dynamics of this type of model starting from low initial concentrations $H_{0}$ and $T_{0}$ in Eq. \eqref{eq:IC_H_and_T}? Critically, a linear stability analysis cannot provide such detailed descriptions of the early dynamics, and typically detailed knowledge about dynamical properties of ODEs at different time scales can be obtained by means of a perturbation analysis~\cite{gerlee2022weak}. Given an appropriate non-dimensionalisation of the model of interest resulting in a small perturbation parameter $\varepsilon\ll{1}$, solutions are expressed as series expansions in this perturbation parameter where the different orders in the series expansions approximate the dynamics of the system of interest at certain time scales. In particular, the initial dynamics are captured by so called \textit{outer solutions} corresponding to the $\mathcal{O}(1)$ terms in the perturbation series, and typically the original system of ODEs can be expressed as simpler ODEs that are valid on specific time scales encoded by orders of the perturbation parameter $\varepsilon$.  

In this work, we derive a class of prion models of mammalian prion toxicity. Using a non-dimensionalisation resulting in a class of models defined by a perturbation parameter corresponding to a dimensionless conversion rate, we define conditions on the conversion function $f$ such that the HSS is a saddle point and the TSS is a stable node. Using numerical simulations starting from low initial concentrations of prions, we show that the dynamics of prion models in this class are divided into three phases referred to as \textit{HeMiTo}-dynamics. First, the system goes through the \textit{healthy phase} called the He-phase, where the concentration of the healthy form normalises while that of the toxic form does not change. Thereafter, the system goes through the \textit{mixed phase} called the Mi-phase where the healthy form reaches a maximum value and the toxic form increases. Lastly, the system goes through the \textit{toxic phase} called the To-phase where the toxic form dominates the dynamics and specifically the system evolves towards the TSS. In many scenarios, it would be reasonable to assume that the healthy protein form has already normalised to regular physiological levels. In this case, the He-phase has already passed, and the introduction of a seed or alteration in system parameters introduces the Me-phase. 
Moreover, using perturbation analyses, we derive approximations of analytical solutions in the initial He- and  Mi-phases capturing the exact time dependence of solutions of our class of prion models and we validate these approximations by fitting them to numerical solutions. Lastly, we show that the choice of conversion function $f$ results in two main types of dynamics during the Mi-phase where the concentration profile for the healthy form $u(\tau)$ is either a concave function which reaches a clear maximum value similar to the epidemiological SIR model~\cite{kermack1927contribution} or it behaves more like a logistic growth function evolving towards a carrying capacity.

\section{Non-dimensionalisation of the class prion models yields a perturbation parameter corresponding to a conversion rate which defines time scales for the dynamics}
We conduct a non-dimensionalisation of the class of prion models in order to derive an appropriate perturbation parameter $\varepsilon$. This perturbation parameter should be small implying that $\varepsilon<1$ in the dimensionless setting, and we will subsequently use this as a basis for a perturbation analysis in order to establish distinct phases during the accumulation of toxic prions. To find such a perturbation parameter, we consider the steady states of the class of prion models of interest. 

Prion models in Eqs. \eqref{eq:ODE_H} to \eqref{eq:ODE_T} are defined by conversion functions $f$ that sustain two steady states. Technically, these are non-negative coordinates $(H^{\star},T^{\star})$ in the $(H,T)$ phase plane for which the derivatives $\dot{H}$ and $\dot{T}$ are zero. Starting with the second ODE for $T$ in Eq. \eqref{eq:ODE_T}, these steady state coordinates solve
\begin{equation}
    T^{\star}\left(k_{3}H^{\star}f(T^{\star})-k_{4}\right)=0\,.
    \label{eq:SS_2}
\end{equation}
One solution is given by $T_{1}^{\star}=0$, and substituting this value into the ODE for $H$ in Eq. \eqref{eq:ODE_H} yields the so called \textit{healthy steady state} (HSS)
\begin{equation}
    \mathrm{HSS}=\left(H^{\star},T^{\star}\right)=\left(\frac{k_{1}}{k_{2}},0\right)\,,
    \label{eq:HSS_dim}
\end{equation}
which is free of toxic prions. Moreover, we refer to the second steady state which is characterised by toxic prions as the \textit{toxic steady state} (TSS). In order for a TSS to exist, we must have that $f(T_{2}^{\star})>0$ and in this case Eq. \eqref{eq:SS_2} yields that the first coordinate of the TSS is given by $H_{2}^{\star}=k_{4}/(k_{3}f(T_{2}^{\star}))$. By substituting this value into Eq. \eqref{eq:ODE_H}, the TSS is given by
\begin{equation}
    \mathrm{TSS}=\left(H_{2}^{\star},T_{2}^{\star},\right)=\left(\frac{k_{4}}{k_{3}f(T_ {2}^{\star})},\frac{k_{1}}{k_{4}}\left(1-\frac{k_{2}k_{4}}{k_{1}k_{3}f(T_{2}^{\star})}\right)\right)\,.
    \label{eq:TSS_dim}
\end{equation}
Of particular interest is the second coordinate $T_{2}^{\star}$, and specifically we consider the following dimensionless parameter
\begin{equation}
    \varepsilon=\frac{k_{2}k_{4}}{k_{1}k_{3}f(T_{2}^{\star})}\,.
    \label{eq:eps}
\end{equation}
In order for a biologically-reasonable TSS to exist, we require that both $H_{2}^{\star}$ and $T_{2}^{\star}$ are positive. Accordingly, the conversion function must satisfy $f(T_{2}^{\star})>0$ so that $H_{2}^{\star}>0$, and $f(T_{2}^{\star})$ must be chosen so that $\varepsilon$ in Eq. \eqref{eq:eps} lies in the interval $\varepsilon\in(0,1)$ which ensures that $T_{2}^{\star}>0$ according to Eq.~\ref{eq:TSS_dim}. Consequently, $\varepsilon$ is our perturbation parameter which we subsequently use as the basis for our non-dimensionalisation and perturbation analysis. In light of this perturbation parameter, we introduce the following dimensionless time variable
\begin{equation}
    \tau=\left(\frac{k_{1}k_{3}f(T_{2}^{\star})}{k_{2}}\right)t\,,
    \label{eq:tau}
\end{equation}
together with the following dimensionless states
\begin{equation}
    u=\left(\frac{k_{3}}{k_{4}}\right)H\,,\quad{v}=\left(\frac{k_{3}}{k_{4}}\right)T\,,
    \label{eq:u_and_v}
\end{equation}
and the following dimensionless parameters
\begin{equation}
   c_{1}=\frac{k_{2}}{k_{4}f(T_{2}^{\star})}\,,\quad{c}_{2}=\frac{k_{2}^{2}}{k_{1}k_{3}f(T_{2}^{\star})}\,.
    \label{eq:parameters}
\end{equation}
Expressing the original class of prion models in terms of these dimensionless variables and parameters (for details, see Appendix \ref{sec:non_dim}) yields the following equivalent dimensionless system
\begin{align}    
\dot{u}=&c_{1}-c_{2}u-\varepsilon{u}vf(v)\,,\label{eq:ODE_u}\\
\dot{v}=&\varepsilon\left(uvf(v)-v\right)\,,\label{eq:ODE_v}\\
&u(\tau=0)=u_{0}\,,\quad{v(\tau=0)}=v_{0}\label{eq:IC_u_and_v}\,,
\end{align}
where, again, time derivatives are denoted by dots, e.g. $\dot{u}=\mathrm{d}u/\mathrm{d}\tau$. Given this dimensionless system, we proceed by conducting a linear stability analysis in order to classify a subclass of models of mammalian prion toxicity defined by choices of conversion functions $f$ having the same dynamical properties as the reference model by Thompson et al.~\cite{thompson2020protein} defined by $f(v)=1$.

\section{Defining a class of models of mammalian prion toxicity by means of linear stability analysis}
Dynamics describing \textit{mammalian prion toxicity} implying the accumulation of toxic prions over time starting from low initial concentrations of prions are based on two mathematical properties. First, as mentioned previously, the existence of a \textit{healthy steady state} (HSS) which is free of toxic prions and a \textit{toxic steady state} (TSS) characterised by toxic prions. Second, prion toxicity corresponds to dynamics where the HSS is a saddle point and the TSS is a stable node. Here, we derive conditions on the arbitrary conversion function $f$ allowing for prion toxicity, and initially we present necessary conditions for the existence of two steady states (Theorem \ref{thm:SS}).

\begin{theorem}[Existence of healthy and toxic steady states]
The system of ODEs in Eqs. \eqref{eq:ODE_u} and \eqref{eq:ODE_v} has a \textit{healthy steady state} (HSS) given by
\begin{equation}
    \mathrm{HSS}=\left(u_{1}^{\star},v_{1}^{\star}\right)=\left(\frac{c_{1}}{c_{2}},0\right)\,.
    \label{eq:HSS}
\end{equation}
If there exists a $v_{2}^{\star}>0$ such that $f(v_{2}^{\star})>0$ that solves
\begin{equation}
    v_{2}^{\star}=\frac{1}{\varepsilon}\left(c_{1}-\frac{c_{2}}{f(v_{2}^{\star})}\right)\,,
    \label{eq:v_2_star}
\end{equation}
then the system has a \textit{toxic steady state} (TSS) given by
\begin{equation}
    \mathrm{TSS}=\left(u_{2}^{\star},v_{2}^{\star}\right)=\left(\frac{1}{f(v_{2}^{\star})},\frac{1}{\varepsilon}\left(c_{1}-\frac{c_{2}}{f(v_{2}^{\star})}\right)\right)\,.
    \label{eq:TSS}
\end{equation}
\label{thm:SS}
\end{theorem}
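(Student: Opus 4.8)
The plan is to compute all simultaneous zeros of the right-hand sides of Eqs.~\eqref{eq:ODE_u} and~\eqref{eq:ODE_v} and to match them against the two claimed steady states. The natural starting point is $\dot v=0$: since $\varepsilon>0$, Eq.~\eqref{eq:ODE_v} factors as $v\left(uf(v)-1\right)=0$, so every steady state lies on one of two branches, $v=0$ or $uf(v)=1$.

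On the branch $v=0$ I would substitute into $\dot u=0$ in Eq.~\eqref{eq:ODE_u}; the conversion term carries an explicit factor of $v$ and hence vanishes (using that $f(0)$ is finite, which holds since $f$ is bounded), leaving the linear equation $c_1-c_2u=0$ and therefore $u=c_1/c_2$. This is the HSS of Eq.~\eqref{eq:HSS}, and it exists unconditionally because $c_1,c_2>0$ by the definitions in Eq.~\eqref{eq:parameters}.

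On the branch $uf(v)=1$, I would first observe that this identity forces $f(v)\neq 0$, so one may legitimately write $u=1/f(v)$; in particular, under the hypothesis $f(v_2^\star)>0$ this $u$-coordinate is positive as required. Substituting $u=1/f(v)$ into $\dot u=0$ and cancelling the common factor $f(v)$ in the conversion term reduces $c_1-c_2/f(v)-\varepsilon u v f(v)=0$ to $c_1-c_2/f(v)-\varepsilon v=0$, which rearranges to the implicit relation~\eqref{eq:v_2_star}. Hence, whenever a root $v_2^\star>0$ with $f(v_2^\star)>0$ exists, the pair $\bigl(1/f(v_2^\star),\,v_2^\star\bigr)$ annihilates both right-hand sides, yielding the TSS of Eq.~\eqref{eq:TSS}; conversely, any steady state with $v\neq 0$ must be of this form.

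Since the theorem asserts the TSS only \emph{given} a solution of Eq.~\eqref{eq:v_2_star}, rather than claiming that such a solution necessarily exists, no intermediate-value or fixed-point argument is needed, and the proof is essentially algebraic bookkeeping. The only step requiring slight care is the case split produced by factoring $\dot v$, together with tracking where division by $f(v)$ is permitted; I expect this to be the sole, and fairly mild, obstacle.
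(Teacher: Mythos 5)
Your proposal is correct and follows essentially the same route as the paper's proof: factor the $\dot v=0$ equation into the branches $v=0$ and $uf(v)=1$, then substitute each into $\dot u=0$ to obtain the HSS and the implicit relation \eqref{eq:v_2_star} for the TSS. Your added care about where division by $f(v)$ is legitimate is a minor refinement of the same argument, not a different approach.
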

\begin{proof}
See Appendix \ref{sec:SS}.
\end{proof}
\begin{remark}
    The coordinate $v_{2}^{\star}$ of the TSS is given by the intersection (FIG. \ref{fig:TSS}) between the conversion function $f$ and the function $g$ defined by
    \begin{equation}
    g(v)=\frac{c_{2}}{c_{1}-\varepsilon{v}}\,.
        \label{eq:g}
    \end{equation}
    \label{remark:g}
\end{remark}
Provided two steady states, we characterise prion toxicity by means of linear stability analysis. To this end, we define a condition determining when the HSS is a saddle point (Proposition \ref{thm:saddle_HSS}).

\begin{figure}[htbp!]
    \centering
    \includegraphics{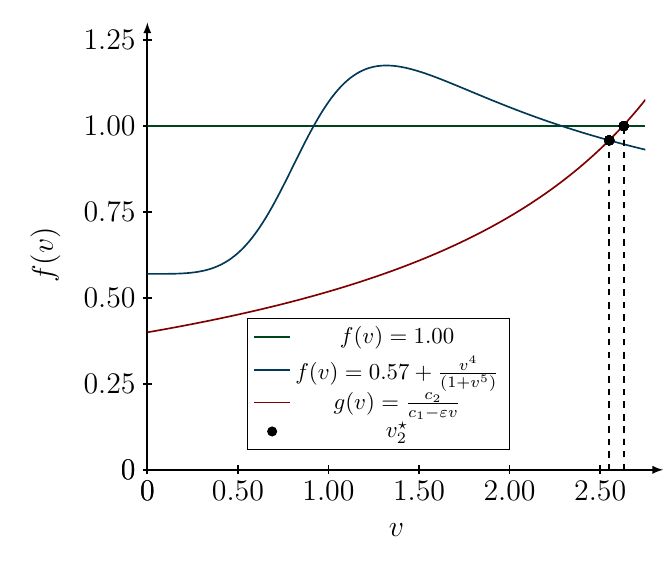}
    \caption{\textit{The toxic steady state (TSS) given by the intersection between the functions $f$ and $g$}. The coordinate $v_{2}^{\star}$ of the TSS is given by the intersection between the conversion function $f(v)$ and the function $g(v)=(1-\varepsilon)/(c(1-\varepsilon{v}))$ illustrated by the red curve. This is visualised in two cases: (\textbf{A}) $f(v)=1.00$ illustrated by the green curve which yields $\mathrm{TSS}=(u_{2}^{\star},v_{2}^{\star})=(1.00,2.63)$, and (\textbf{B}) $f(v)=0.57+(v^{4}/(1+v^{5}))$ illustrated by the blue curve which yields $\mathrm{TSS}=(u_{2}^{\star},v_{2}^{\star})=(1.04,2.55)$. The parameters defining the illustrated curves are $c_{1}=1.75$, $c_{2}=0.70$ and $\varepsilon=0.40$.}
    \label{fig:TSS}
\end{figure}

\begin{proposition}[Condition defining the HSS as saddle point]
The HSS in Eq. \eqref{eq:HSS} is a saddle point if the conversion function $f$ satisfies
\begin{equation}
    f(0)>\frac{c_{2}}{c_{1}}\,.
    \label{eq:HSS_saddle}
\end{equation}
    \label{thm:saddle_HSS}
\end{proposition}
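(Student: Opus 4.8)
The plan is to perform a standard linear stability analysis at the HSS, i.e. to compute the Jacobian of the right-hand side of Eqs. \eqref{eq:ODE_u}--\eqref{eq:ODE_v}, evaluate it at $(u_{1}^{\star},v_{1}^{\star})=(c_{1}/c_{2},0)$, and extract its eigenvalues. Writing $F(u,v)=c_{1}-c_{2}u-\varepsilon uvf(v)$ and $G(u,v)=\varepsilon(uvf(v)-v)$, the Jacobian has entries $F_{u}=-c_{2}-\varepsilon vf(v)$, $F_{v}=-\varepsilon u\bigl(f(v)+vf'(v)\bigr)$, $G_{u}=\varepsilon vf(v)$, and $G_{v}=\varepsilon\bigl(uf(v)+uvf'(v)-1\bigr)$. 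The key observation is that the HSS has $v_{1}^{\star}=0$, so every term carrying a factor $v$ vanishes there; in particular $G_{u}=0$ at the HSS, and since $f$ is assumed bounded and differentiable the product $vf'(v)$ vanishes at $v=0$ as well.

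First I would therefore record that at the HSS the Jacobian is lower-triangular (indeed $G_{u}=0$), with diagonal entries $F_{u}=-c_{2}$ and $G_{v}=\varepsilon\bigl((c_{1}/c_{2})f(0)-1\bigr)$. Consequently the eigenvalues are read off directly from the diagonal: $\lambda_{1}=-c_{2}$ and $\lambda_{2}=\varepsilon\bigl((c_{1}/c_{2})f(0)-1\bigr)$, and both are real.

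Next I would invoke the definition of a saddle point: since the two eigenvalues are real, the HSS is a saddle precisely when they have opposite signs, i.e. when $\lambda_{1}\lambda_{2}<0$. Because $c_{1},c_{2}>0$ (they are built from the positive rate constants $k_{i}$, see Eq. \eqref{eq:parameters}), we have $\lambda_{1}=-c_{2}<0$, so the saddle condition reduces to $\lambda_{2}>0$. Using $\varepsilon>0$ (established in the non-dimensionalisation, Eq. \eqref{eq:eps}), this is equivalent to $(c_{1}/c_{2})f(0)-1>0$, i.e. to $f(0)>c_{2}/c_{1}$, which is exactly Eq. \eqref{eq:HSS_saddle}.

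There is no serious obstacle here; the only points requiring a little care are (i) checking that the $vf(v)$ and $vf'(v)$ contributions genuinely vanish at $v=0$, which is where the standing assumption that $f$ is bounded and differentiable is used, and (ii) being explicit that $c_{1},c_{2},\varepsilon$ are strictly positive so that the sign chase is valid. One might also remark that the same computation shows $\lambda_{2}<0$ yields a stable node at the HSS and $\lambda_{2}=0$ a degenerate case, but only the saddle direction is needed for the proposition.
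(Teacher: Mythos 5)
Your proof is correct and takes essentially the same route as the paper: both linearise at the HSS, the paper checking that the Jacobian determinant $\mathrm{Det}(\mathcal{J})=c_{2}\varepsilon\left(1-\tfrac{c_{1}}{c_{2}}f(0)\right)$ is negative, while you read off the eigenvalues $-c_{2}$ and $\varepsilon\left(\tfrac{c_{1}}{c_{2}}f(0)-1\right)$ from the triangular structure and require opposite signs --- equivalent criteria, since the determinant is the product of the eigenvalues. (One cosmetic slip: with $G_{u}=0$ the Jacobian at the HSS is upper-, not lower-, triangular, which changes nothing.)
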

\begin{proof}
    See Appendix \ref{sec:saddle_HSS}.
\end{proof}
\begin{remark}
    In terms of the function $g$ in Eq. \eqref{eq:g}, the inequality in Eq. \eqref{eq:HSS_saddle} determining when the HSS is a saddle point is given by $f(0)>g(0)$. This fact together with Remark \ref{remark:g} implies that the function $f$ is bounded from below by the function $g$ until their first intersection, which is expressed as follows
    \begin{equation}
        f(v)>{g(v)}\,\forall{v}\in[0,v_{2}^{\star})\,,\quad{f}(v_{2}^{\star})=g(v_{2}^{\star})\,.
        \label{eq:f_bound_by_g}
    \end{equation}
\end{remark}

In addition to the HSS being a saddle point, by studying the stability properties of the linearised system around the TSS we define conditions on the arbitrary functions $f$ resulting in dynamics characterising mammalian prion toxicity (Theorem \ref{thm:stable_TSS}). 

\begin{theorem}[Conditions defining mammalian prion toxicity]
    Prion models in Eqs. \eqref{eq:ODE_u} and \eqref{eq:ODE_v} defined by conversion functions $f$ sustaining a $\mathrm{TSS}=(u_{2}^{\star},v_{2}^{\star})$ given by Eq. \eqref{eq:TSS} as well as satisfying the condition in Eq. \eqref{eq:HSS_saddle} together with the condition
    \begin{equation}
        f'(v_{2}^{\star})\leq 0\,,
    \end{equation}
    have a HSS that is a saddle point and a TSS that is a stable node.
    \label{thm:stable_TSS}
\end{theorem}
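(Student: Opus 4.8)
The plan is to treat the two steady states separately, since the statement about the $\mathrm{HSS}$ is already available: under the assumed inequality $f(0)>c_{2}/c_{1}$, Proposition~\ref{thm:saddle_HSS} immediately yields that the $\mathrm{HSS}$ is a saddle point, so no further work is needed there. The whole substance of the theorem therefore concerns the $\mathrm{TSS}$, and the natural tool is a linear stability analysis of Eqs.~\eqref{eq:ODE_u}--\eqref{eq:ODE_v} about $(u_{2}^{\star},v_{2}^{\star})$.

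First I would form the Jacobian $J$ of the vector field $\bigl(c_{1}-c_{2}u-\varepsilon uvf(v),\ \varepsilon(uvf(v)-v)\bigr)$; its entries involve $\partial_{v}[vf(v)]=f(v)+vf'(v)$. I would then evaluate $J$ at the $\mathrm{TSS}$ and simplify the entries using the two relations that characterise the $\mathrm{TSS}$, namely $u_{2}^{\star}f(v_{2}^{\star})=1$ (equivalently $u_{2}^{\star}=1/f(v_{2}^{\star})$ from Eq.~\eqref{eq:TSS}) and $\varepsilon v_{2}^{\star}=c_{1}-c_{2}/f(v_{2}^{\star})$ from Eq.~\eqref{eq:v_2_star}. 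After these substitutions the entries depend only on $\varepsilon$, $c_{2}$, $v_{2}^{\star}$, $f(v_{2}^{\star})$ and $f'(v_{2}^{\star})$, with the $(2,2)$ entry proportional to $f'(v_{2}^{\star})$ --- which is precisely where the hypothesis $f'(v_{2}^{\star})\le 0$ enters.

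Next I would compute $\operatorname{tr}J$ and $\det J$ and aim to show $\det J>0$ and $\operatorname{tr}J<0$, which forces both eigenvalues to have negative real part and hence gives asymptotic stability of the $\mathrm{TSS}$. The point is that all contributions built from the positive quantities $c_{1},c_{2},\varepsilon,v_{2}^{\star},f(v_{2}^{\star})$ already carry the correct sign (negative trace, positive determinant), so the only danger is the terms proportional to $f'(v_{2}^{\star})$; under $f'(v_{2}^{\star})\le 0$ these terms also have the favourable sign in both expressions, and the conclusion follows. It is worth noting that the weaker inequality on $f'(v_{2}^{\star})$ that is automatically forced by the geometry of the intersection (that $f$ lies above $g$ up to their first crossing, as in the remark following Proposition~\ref{thm:saddle_HSS}) is not by itself enough to pin the signs of $\operatorname{tr}J$ and $\det J$, which is why the stronger condition $f'(v_{2}^{\star})\le 0$ is imposed.

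The remaining --- and I expect hardest --- step is to distinguish a \emph{node} from a spiral, i.e.\ to verify that the eigenvalues are real by showing the discriminant $(\operatorname{tr}J)^{2}-4\det J\ge 0$. Unlike the trace and determinant, this quantity is not a sum of sign-definite terms, so I would expect it not to follow from the three stated hypotheses in isolation but to require extra structure: either the algebraic relations linking $\varepsilon$, $c_{1}$ and $c_{2}$ that emerged from the non-dimensionalisation, or the smallness of the perturbation parameter $\varepsilon$ (for $\varepsilon$ small the determinant term $4\det J$ is negligible compared to the positive $\mathcal{O}(1)$ quantity $(\operatorname{tr}J)^{2}$). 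I would therefore organise the proof with the trace/determinant computation as the routine core and devote the care to establishing the discriminant bound.
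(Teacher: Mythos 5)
Your core computation is exactly the paper's proof: the paper forms the Jacobian of Eqs.~\eqref{eq:ODE_u}--\eqref{eq:ODE_v}, evaluates its trace and determinant at the TSS using $u_{2}^{\star}f(v_{2}^{\star})=1$, and observes that every term has the favourable sign except those proportional to $f'(v_{2}^{\star})$, which also become favourable under $f'(v_{2}^{\star})\leq 0$; the HSS part is, as you say, just Proposition~\ref{thm:saddle_HSS}. So up to and including the trace/determinant step your proposal and the paper coincide.

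Where you genuinely add something is the last paragraph. The paper stops after establishing $\operatorname{tr}\mathcal{J}<0$ and $\det\mathcal{J}>0$ and simply asserts that this makes the TSS a stable node; it never examines the discriminant $(\operatorname{tr}\mathcal{J})^{2}-4\det\mathcal{J}$. Your suspicion that the node property does not follow from the three stated hypotheses alone is correct. Negative trace and positive determinant only give asymptotic stability, and one can choose admissible parameters for which the eigenvalues are complex: for instance with $f\equiv 1$ one finds $\operatorname{tr}\mathcal{J}=-(c_{2}+\varepsilon v_{2}^{\star})$ and $\det\mathcal{J}=\varepsilon^{2}v_{2}^{\star}$, and taking $c_{2}$ small relative to $\varepsilon$ (e.g.\ $c_{1}=0.2$, $c_{2}=0.1$, $\varepsilon=0.5$, which satisfies Eq.~\eqref{eq:HSS_saddle} and yields $v_{2}^{\star}>0$) makes the discriminant negative, so the TSS is a stable focus rather than a node. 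Your proposed fix --- invoking smallness of $\varepsilon$, under which $\det\mathcal{J}$ is subdominant to $(\operatorname{tr}\mathcal{J})^{2}$ --- is the right way to rescue the ``node'' conclusion, but be aware it is an \emph{additional} hypothesis not present in the theorem statement; without it the honest conclusion is ``asymptotically stable equilibrium''. In short: your proof reproduces everything the paper actually proves, and correctly identifies the one step the paper claims but does not establish.
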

\begin{proof}
    See Appendix \ref{sec:stable_TSS}.
\end{proof}
\begin{remark}
    The two functions $f(v)=1.00$ corresponding to the model by Thompson et al.~\cite{thompson2020protein} with $f'(v_{2}^{\star})=0$ and
    \begin{equation}
        f(v)=0.57+\frac{v^{4}}{1+v^{5}}\,,
        \label{eq:f_2}
    \end{equation}
    with $f'(v_{2}^{\star})<0$ result in models describing mammalian prion toxicity according to Theorem \ref{thm:stable_TSS} (FIG. \ref{fig:TSS}).
\end{remark}
In total, the linear stability analysis yields conditions (Theorems \ref{thm:SS} and \ref{thm:stable_TSS}) on the conversion functions $f$ in the class of prion models defined by Eqs. \eqref{eq:ODE_u} and \eqref{eq:ODE_v} ensuring an accumulation of toxic prions over time. Specifically, these conditions define the long term dynamics as solutions of our class of prion models in this case approach the TSSs. Nevertheless, these results from the linear stability analysis say nothing about the corresponding short term dynamics. Next, we provide a qualitative description of both short and long term dynamics by means of simulations.

\section{HeMiTo-dynamics: prion toxicity is characterised by three phases}

The dynamics of mammalian prion toxicity characterised herein rely on two imposed conditions. First, the initial concentrations of healthy and toxic proteins are low, and second, the concentration of the toxic form increases and accumulates over time.
Mathematically, these assumptions correspond to low initial concentrations $u_{0}=\mathcal{O}(1)<\mathcal{O}(1/\varepsilon)$ and $v_{0}=\mathcal{O}(1)<\mathcal{O}(1/\varepsilon)$ in Eq. \eqref{eq:IC_u_and_v} and that parameters and conversion functions $f$ are chosen such that the HSS is a saddle point and the TSS is a stable node in accordance with Theorem \ref{thm:stable_TSS}. Additionally, we assume that the conversion function $f(v)$ and toxic concentration $v$ is of the same order $f(v) \sim \mathcal{O}(v)$, and therefore we have $f(v_{0})=\mathcal{O}(1)$ initially. Given these conditions, the dynamics characterising mammalian prion toxicity are divided into three phases captured by the acronym \textit{HeMiTo} (FIG. \ref{fig:HeMiTo}). 
First, the system goes through the \textit{healthy phase} referred to as the He-phase where the healthy form dominates the dynamics. In this phase, the healthy concentration approaches the HSS, and the concentration of the toxic form is constant (and low).
Given that the HSS is a saddle point and that the initial conditions are chosen relatively close to the HSS in the $(u,v)$ phase plane, solution trajectories in the He-phase move along the stable eigenvector of the linearised system around the HSS. Second, the system goes through the \textit{mixed phase} referred to as the Mi-phase where both species interact. Specifically, the trajectories of the healthy form attain maxima while the trajectories of the toxic form increase over time during the Mi-phase. Third, the system goes through the \textit{toxic phase} referred to as the To-phase where the toxic form dominates the dynamics. Mathematically, the dynamics of the system in the To-phase are approximately captured by the linearised system around the TSS which was previously considered in the linear stability analysis. Subsequently, we present similar approximations describing the dynamics during the He- and Mi-phases.  
It is, however, important to note that in biological settings, the initial concentration of healthy protein would not be small, but rather, close to the HSS. In this case, the healthy phase is trivial and characterised by constant concentrations of the healthy form and low concentrations of the toxic form. This point is addressed further in the Discussion.

\begin{figure}[htbp!]
    \centering
    \includegraphics{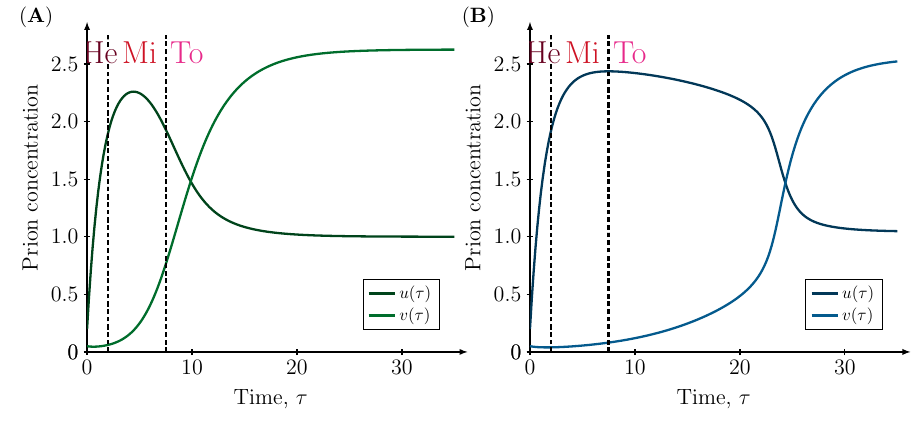}
    \caption{\textit{HeMiTo-dynamics}. The dynamics of the system of ODEs in Eqs. \eqref{eq:ODE_u} and \eqref{eq:ODE_v} characterising mammalian prion toxicity are divided into three phases remembered by the acronym \textit{HeMiTo}. First, the system goes through the \textit{healthy phase} referred to as the He-phase where the healthy form $u(\tau)$ approaches the healthy steady-state, and the toxic form $v(\tau)$ is constant. Second, the system goes through the \textit{mixed phase} referred to as the Mi-phase where the healthy and toxic forms interact. Specifically, the healthy form begins to decrease while the toxic form increases throughout the Mi-phase. Third, the system goes through the \textit{toxic phase} referred to as the To-phase where the toxic form dominates the dynamics over the healthy form. Here, solutions of the class of prion models evolve towards the stable TSS. This type of dynamics is visualised in two cases defined by distinct conversion functions: (\textbf{A}) $f(v)=1.00$, and (\textbf{B}) $f(v)=0.57+(v^{4}/(1+v^{5}))$.  The parameters defining the illustrated curves are $c_{1}=1.75$, $c_{2}=0.70$ and $\varepsilon=0.40$, and the initial conditions in both cases are given by $(u_{0},v_{0})=(0.20,0.05)$.}
    \label{fig:HeMiTo}
\end{figure}

\section{Finding approximate analytical solutions in the He-phase by means of perturbation ans\"{a}tze}
A powerful method for obtaining approximate analytical solutions of non-linear ODEs characterised by distinct phases is that of perturbation analysis~\cite{gerlee2022weak}. Given a dimensionless non-linear system of ODEs containing a small perturbation parameter $\varepsilon<1$, analytical solutions are approximated by series expansions in this perturbation parameter. Specifically, perturbation ans\"{a}tze are substituted into the original system of non-linear ODEs and then analytical solutions of simpler ODEs corresponding to the various orders of $\varepsilon$ are found. The first terms of order $\mathcal{O}(1)$ in these series expansions are referred to as \textit{outer solutions} and they describe the initial dynamics on a short time scale. Here, we define appropriate perturbation ans\"{a}tze where the outer solutions describe the dynamics in the He- and Mi-phases.


Initially, the dynamics are dominated by the formation of healthy prions while the conversion from healthy to toxic prions is comparatively small. Provided low initial prion concentrations in Eq. \eqref{eq:IC_u_and_v} implying that $u_{0}=\mathcal{O}(1)$ and $v_{0}=\mathcal{O}(1)$, this means that the formation rate ``$c_{1}$'' and the degradation rate ``$c_{2}u$'' in the ODE for $u$ in Eq. \eqref{eq:ODE_u} are of the same order while the conversion rate ``$\varepsilon{u}vf(v)$'' is small, and in particular we have $u(\tau)=\mathcal{O}(1)$ for early times $\tau$ close to $0$. Accordingly, consider the following regular perturbation ans\"{a}tze for the healthy and toxic form, respectively:
\begin{align}
    u(\tau)&=u_{\mathrm{He}}(\tau)+u_{1}(\tau)\varepsilon+\mathcal{O}(\varepsilon^{2})\,,\label{eq:u_approx}\\
    v(\tau)&=v_{\mathrm{He}}(\tau)+v_{1}(\tau)\varepsilon+\mathcal{O}(\varepsilon^{2})\,.\label{eq:v_approx}
\end{align}
 where the outer solutions $u_{\mathrm{He}}$ and $v_{\mathrm{He}}$ describe the dynamics in the He-phase. By substituting these ans\"{a}tze into the original class of prion models in Eqs. \eqref{eq:ODE_u} and \eqref{eq:ODE_v} and solving for the leading terms, we find approximate analytical solutions in the He-phase (Theorem \ref{thm:He}). 

\begin{theorem}[Approximate analytical solutions of prion models in the He-phase]
The outer solutions $u_{\mathrm{He}}(\tau)$ and $v_{\mathrm{He}}(\tau)$ in Eqs. \eqref{eq:u_approx} and \eqref{eq:v_approx} approximating the early dynamics when $u(\tau)=\mathcal{O}(1)$ of the system in Eqs. \eqref{eq:ODE_u} and \eqref{eq:ODE_v} under the assumption that $f(v)=\mathcal{O}(1)$ are given by
\begin{align}
    u_{\mathrm{He}}(\tau)&=\frac{c_{1}}{c_{2}}-\left(\frac{c_{1}}{c_{2}}-u_{0}\right)\exp\left(-c_{2}\tau\right)\,,\label{eq:u_He}\\
    v_{\mathrm{He}}(\tau)&=v_{0}\,,\label{eq:v_He}
\end{align}
where $u_{0}=\mathcal{O}(1)<\mathcal{O}(1/\varepsilon)$ and $v_{0}=\mathcal{O}(1)<\mathcal{O}(1/\varepsilon)$ are the initial conditions in Eq. \eqref{eq:IC_u_and_v}.
\label{thm:He}    
\end{theorem}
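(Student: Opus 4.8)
The plan is to substitute the regular perturbation ans\"{a}tze from Eqs.~\eqref{eq:u_approx} and~\eqref{eq:v_approx} into the dimensionless system in Eqs.~\eqref{eq:ODE_u} and~\eqref{eq:ODE_v}, collect terms order-by-order in $\varepsilon$, and solve the resulting $\mathcal{O}(1)$ problem. First I would note that the conversion function $f$ is analytic (as assumed in the model setup), so that $f(v)=f(v_{\mathrm{He}}+\varepsilon v_{1}+\cdots)=f(v_{\mathrm{He}})+\mathcal{O}(\varepsilon)$ by Taylor expansion, and hence the nonlinear term $\varepsilon u v f(v)$ contributes only at $\mathcal{O}(\varepsilon)$ and higher — this uses the hypothesis $f(v)=\mathcal{O}(1)$ and $u(\tau)=\mathcal{O}(1)$ so that the product $uvf(v)$ stays $\mathcal{O}(1)$ and the explicit $\varepsilon$ prefactor does the bookkeeping.

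Next I would extract the leading-order equations. For Eq.~\eqref{eq:ODE_v}, the entire right-hand side carries a factor $\varepsilon$, so at $\mathcal{O}(1)$ one gets $\dot{v}_{\mathrm{He}}=0$, which together with the initial condition $v(\tau=0)=v_{0}$ — matched order by order, giving $v_{\mathrm{He}}(0)=v_{0}$ — yields $v_{\mathrm{He}}(\tau)=v_{0}$, establishing Eq.~\eqref{eq:v_He}. For Eq.~\eqref{eq:ODE_u}, the $\mathcal{O}(1)$ terms give the linear inhomogeneous ODE $\dot{u}_{\mathrm{He}}=c_{1}-c_{2}u_{\mathrm{He}}$ with $u_{\mathrm{He}}(0)=u_{0}$. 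This is a first-order linear ODE with constant coefficients; solving it via integrating factor $e^{c_{2}\tau}$ (or by noting the equilibrium $c_{1}/c_{2}$ and exponential relaxation) gives $u_{\mathrm{He}}(\tau)=\frac{c_{1}}{c_{2}}-\left(\frac{c_{1}}{c_{2}}-u_{0}\right)\exp(-c_{2}\tau)$, which is exactly Eq.~\eqref{eq:u_He}. I would then briefly remark that this leading-order $u_{\mathrm{He}}$ indeed relaxes to the first coordinate $c_{1}/c_{2}$ of the HSS from Theorem~\ref{thm:SS}, consistent with the qualitative description of the He-phase.

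The main obstacle — and the point I would be most careful about — is justifying the validity/consistency of the ansatz rather than the algebra, which is routine. Specifically, one must check that the ordering assumptions are self-consistent: that $u_{\mathrm{He}}(\tau)=\mathcal{O}(1)$ for all $\tau$ in the He-phase (true since it is bounded between $u_{0}$ and $c_{1}/c_{2}$, both $\mathcal{O}(1)$), that $v_{\mathrm{He}}=v_{0}=\mathcal{O}(1)$ remains bounded away from any singularity of $g$ or zero of $f$, and that the neglected term $\varepsilon u v f(v)$ is genuinely $\mathcal{O}(\varepsilon)$ uniformly on the $\mathcal{O}(1)$ time scale. I would also point out the breakdown: the $\mathcal{O}(\varepsilon)$ correction $v_{1}(\tau)$ grows (it satisfies $\dot{v}_{1}=u_{\mathrm{He}}v_{0}f(v_{0})-v_{0}$, which is generically nonzero and roughly linear in $\tau$ for large $\tau$), so the expansion~\eqref{eq:v_approx} ceases to be uniformly valid once $\varepsilon\tau=\mathcal{O}(1)$, i.e. on the time scale $\tau=\mathcal{O}(1/\varepsilon)$ — precisely where the Mi-phase analysis must take over. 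Since the theorem only claims the outer solutions "approximating the early dynamics when $u(\tau)=\mathcal{O}(1)$," the proof need only carry out the leading-order solve plus this consistency check, and I would keep the detailed estimates light, deferring any rescaling to the subsequent Mi-phase section.
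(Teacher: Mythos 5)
Your proposal is correct and follows the paper's own argument exactly: substitute the regular perturbation ans\"{a}tze into Eqs.~\eqref{eq:ODE_u} and~\eqref{eq:ODE_v}, extract the $\mathcal{O}(1)$ terms to obtain $\dot{u}_{\mathrm{He}}=c_{1}-c_{2}u_{\mathrm{He}}$ and $\dot{v}_{\mathrm{He}}=0$ with the given initial data, and solve these linear equations. Your additional remarks on self-consistency of the ordering assumptions and on the secular growth of $v_{1}$ signalling breakdown at $\tau=\mathcal{O}(1/\varepsilon)$ go beyond what the paper records in Appendix~\ref{sec:He}, but they are accurate and do not change the route.
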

\begin{proof}
    See Appendix \ref{sec:He}.
\end{proof}
Importantly, these approximations agree with numerical solutions of two particular prion models described by the ODEs in Eqs. \eqref{eq:ODE_u} to \eqref{eq:IC_u_and_v} with conversion functions $f(v)=1.00$ and $f(v)=0.57+((v^{4})/(1+v^{5}))$, respectively (FIG. \ref{fig:He}). These approximations are valid while the concentration of the healthy form is low, i.e. $u(\tau)=\mathcal{O}(1)$, and when this concentration becomes sufficiently high meaning $u(\tau)=\mathcal{O}(1/\varepsilon)$ the system enters the subsequent Mi-phase.

\begin{figure}[htbp!]
    \centering
    \includegraphics{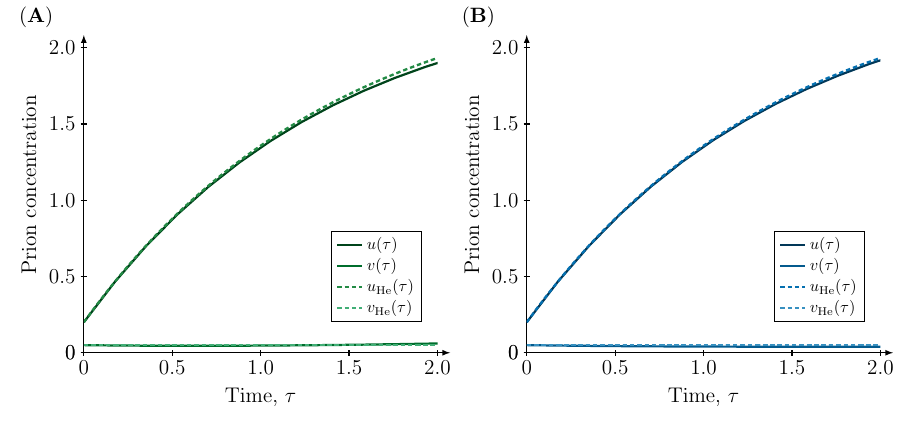}
    \caption{\textit{Prion dynamics during the He-phase}. The dynamics of the system of ODEs in Eqs. \eqref{eq:ODE_u} and \eqref{eq:ODE_v} in the early He-phase are illustrated. Specifically, numerical solutions $u(\tau)$ and $v(\tau)$ are compared to their respective approximations $u_{\mathrm{He}}(\tau)$ and $v_{\mathrm{He}}(\tau)$ in Eqs. \eqref{eq:u_He} and \eqref{eq:v_He}, respectively. This is visualised in two cases defined by distinct conversion functions: (\textbf{A}) $f(v)=1.00$, and (\textbf{B}) $f(v)=0.57+(v^{4}/(1+v^{5}))$.  The parameters defining the illustrated curves are $c_{1}=1.75$, $c_{2}=0.70$ and $\varepsilon=0.40$, and the initial conditions in both cases are given by $(u_{0},v_{0})=(0.20,0.05)$.}
    \label{fig:He}
\end{figure}

\section{Characterising two distinct types of dynamics during the Mi-phase by means of perturbation and asymptotic analyses}
During the Mi-phase, the conversion rate from healthy to toxic prions dominates the dynamics over the formation rate of the healthy form. In particular, since the healthy form approaches the HSS in the He-phase, we have that $u(\tau)=\mathcal{O}(1/\varepsilon)$ which implies that the constant formation rate $c_{1}=\mathcal{O}(1)<\mathcal{O}(1/\varepsilon)$ is comparatively small during the Mi-phase. 

To capture the dynamics in this phase, we re-scale the states as follows
\begin{equation}
    U(\tau)=\varepsilon{u}(\tau)\,,\quad{V}(\tau)=\varepsilon{v}(\tau)\,.
    \label{eq:states_Mi}
\end{equation}
Moreover, we approximate the conversion function by
\begin{equation}
    f(v)\approx{f}(v_{\mathrm{He}})+\mathcal{O}(\varepsilon)=f(v_{0})+\mathcal{O}(\varepsilon)\,,
    \label{eq:f_approx}
\end{equation}
where $v_{0}=\mathcal{O}(1)$ is the initial concentration of the toxic form in Eq. \eqref{eq:IC_u_and_v} and where $f(v_{0})=\mathcal{O}(1)$. This approximation can be justified by the arguments presented by Gerlee~\cite{gerlee2022weak} which state that the approximation is, in fact, exact when the conversion function $f$ is a polynomial. Also, the approximation is accurate for non-polynomial conversion functions $f$ granted continuity over a closed interval, as they, in turn, can be approximated by polynomials as guaranteed by the Stone-Weierstrass theorem~\cite{stone-weierstrass}. In light of this approximation for the conversion function $f$ in Eq. \eqref{eq:f_approx}, by multiplying the original ODEs for $u$ and $v$ in Eqs. \eqref{eq:ODE_u} and \eqref{eq:ODE_v} by $\varepsilon$, we obtain an approximate system of ODEs describing the dynamics during the Mi-phase in terms of the new states $U$ and $V$ in Eq. \eqref{eq:states_Mi} which is given by
\begin{align}
    \dot{U}&=c_{1}\varepsilon-c_{2}U-UVf(v_{0})\,,\label{eq:ODE_U}\\
    \dot{V}&=UVf(v_{0})-\varepsilon{V}\label{eq:ODE_V}\,.
\end{align}
Next, we analyse the initial dynamics of this approximate system by means of perturbation methods, and accordingly we consider the following perturbation ans\"{a}tze
    \begin{align}
    U(\tau)&=u_{\mathrm{Mi}}(\tau)+U_{1}(\tau)\varepsilon+\mathcal{O}(\varepsilon^{2})\,,\label{eq:U_approx}\\
    V(\tau)&=v_{\mathrm{Mi}}(\tau)+V_{1}(\tau)\varepsilon+\mathcal{O}(\varepsilon^{2})\,.\label{eq:V_approx}
\end{align}
Specifically, we aim at describing the approximate functional forms of the outer solutions $u_{\mathrm{Mi}}$ and $v_{\mathrm{Mi}}$ where the explicit time-dependence is captured. Substituting these ans\"{a}tze into the system of ODEs in Eqs. \eqref{eq:ODE_U} and \eqref{eq:ODE_V} and solving for the leading terms yields 
\begin{align}
    \dot{u}_{\mathrm{Mi}}&=-u_{\mathrm{Mi}}\left(c_{2}+v_{\mathrm{Mi}}f(v_{0})\right)\,,\label{eq:ODE_u_MI}\\
    \dot{v}_{\mathrm{Mi}}&=v_{\mathrm{Mi}}u_{\mathrm{Mi}}f(v_{0})\label{eq:ODE_v_Mi}\,.
\end{align}
Given this system, we find asymptotic approximations of the solutions $u_{\mathrm{Mi}}(\tau)$ and $v_{\mathrm{Mi}}(\tau)$ that explicitly describe the time dependence. Importantly, the structure of this model of prion dynamics during the Mi-phase is similar to the well-known epidemiological SIR model originally formulated by McKendrick and Kermack~\cite{kermack1927contribution}. Recently, asymptotic approximations of the solutions of the SIR model were found by integrating the ODEs of interest with respect to time and thereafter approximating the resulting unknown integrals~\cite{prodanov2023asymptotic}. Using this technique, the ODEs for $u_{\mathrm{Mi}}$ and $v_{\mathrm{Mi}}$ in Eqs. \eqref{eq:ODE_u_MI} and \eqref{eq:ODE_v_Mi}, respectively, can be formulated as equivalent integral equations
\begin{align}
    u_{\mathrm{Mi}}(\tau)&=\tilde{C}_{1}\exp\left(-c_{2}\tau\right)\exp\left(-f(v_{0})\int_{0}^{\tau}v_{\mathrm{Mi}}(s)\mathrm{d}s\right)\,,\label{eq:int_u_MI}\\
    v_{\mathrm{Mi}}(\tau)&=v_{0}\exp\left(f(v_{0})\int^{\tau}_ {0}u_{\mathrm{Mi}}(s)\mathrm{d}s\right)\label{eq:int_v_Mi}\,,
\end{align}
for some integration constant $\tilde{C}_{1}$. Starting with the last integral equation, a linearisation of the integral on the right hand side of Eq. \eqref{eq:int_v_Mi} around $\tau=0$ yields the following functional form for the evolution of the concentration of toxic prions during the Mi-phase
\begin{equation}
    v_{\mathrm{Mi}}(\tau)\approx{C}_{1}\exp(C_{2}\tau)\,,
    \label{eq:v_Mi}
\end{equation}
where $C_{1}$ and $C_{2}$ are positive constants. This indicates that the concentration of the toxic form increases exponentially during the Mi-phase. Better still, we validate this functional form by fitting such exponential functions to numerical solutions $v$ of the original ODE system in Eqs. \eqref{eq:ODE_u} and \eqref{eq:ODE_v} during the Mi-phase (FIG. \ref{fig:Mi}B and \ref{fig:Mi}D), and importantly these numerical solutions are well approximated by exponential functions. Moreover, assuming that the toxic form $v_{\mathrm{Mi}}(\tau)$ is approximately given by an exponential function allows us to approximate the functional form of the healthy form $u_{\mathrm{Mi}}(\tau)$.

\begin{figure}[htbp!]
    \centering
    \includegraphics{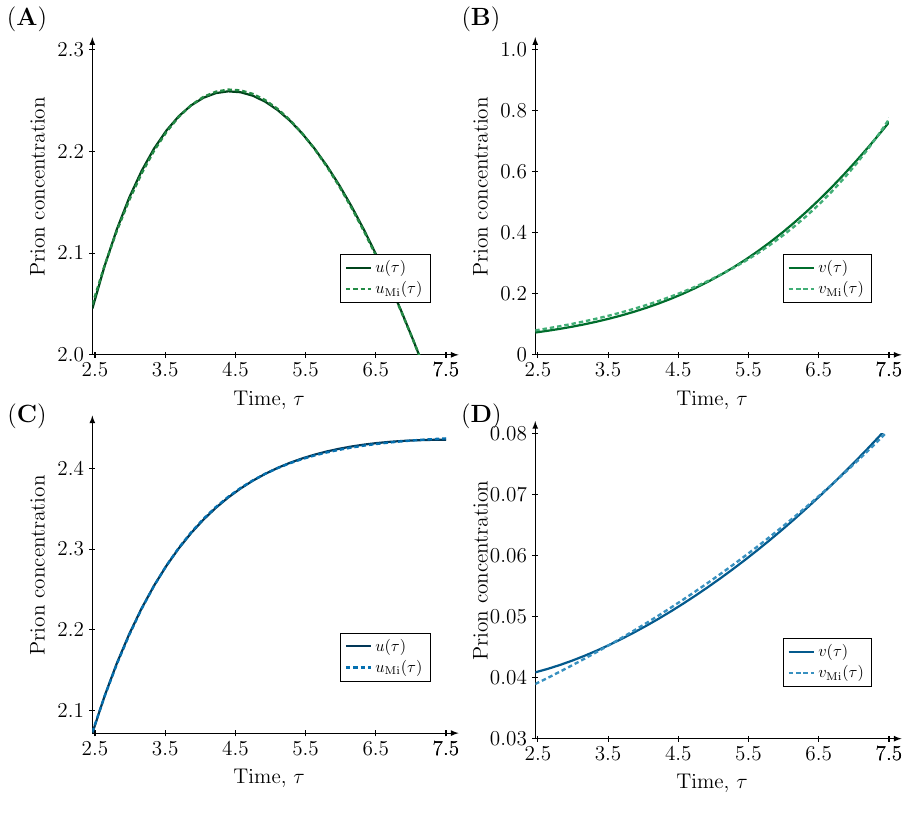}
    \caption{\textit{Prion dynamics during the Mi-phase}. The dynamics of the system of ODEs in Eqs. \eqref{eq:ODE_u} and \eqref{eq:ODE_v} in the Mi-phase are illustrated where the healthy form $u$ attains maxima and the toxic form increases exponentially. Specifically, the approximations $u_{\mathrm{Mi}}(\tau)$ and $v_{\mathrm{Mi}}(\tau)$ in Eqs. \eqref{eq:u_Mi} and \eqref{eq:v_Mi} have been fitted to numerical solutions $u(\tau)$ and $v(\tau)$ in the Mi-phase. 
    This is illustrated in two cases defined by distinct conversion functions: the top row where $f(v)=1.00$ and the bottom row where $f(v)=0.57+(v^{4}/(1+v^{5}))$. The calibrations yield the following fitted parameters (\textbf{A}) $(C_{3},C_{4},C_{5},C_{6})=(5.21,0.91,0.13,-3.16)$ in Eq. \eqref{eq:u_Mi} for $u_{\mathrm{Mi}}(\tau)$ with $u_{\max}=2.26$, (\textbf{B}) $(C_{1},C_{2})=(0.08,0.04)$ in Eq. \eqref{eq:v_Mi} for $v_{\mathrm{Mi}}(\tau)$, (\textbf{C}) $(C_{3},C_{4},C_{5},C_{6})=(-0.37,-0.13,1.08,2.44)$ in Eq. \eqref{eq:u_Mi} for $u_{\mathrm{Mi}}(\tau)$ with $u_{\max}=2.44$ and (\textbf{D}) $(C_{1},C_{2})=(0.45,0.14)$ in Eq. \eqref{eq:v_Mi} for $v_{\mathrm{Mi}}$. The parameters defining the illustrated curves are $c_{1}=1.75$, $c_{2}=0.70$ and $\varepsilon=0.40$, and the initial conditions in both cases are given by $(u_{0},v_{0})=(0.20,0.05)$.}
    \label{fig:Mi}
\end{figure}

The assumption that the toxic form increases exponentially implies that the healthy form is described by the quotient between a double exponential and an exponential function. Specifically, approximating $v_{\mathrm{Mi}}$ by an exponential function implies that the last factor on the right hand side of the integral equation for $u_{\mathrm{Mi}}$ in Eq. \eqref{eq:ODE_u_MI} is approximately given by a double exponential function of the type $\exp(-K_{1}\exp(K_{2}\tau))$ where $K_{1},K_{2}$ are constants. Such a double exponential function has the following series representation
\begin{equation}
    \exp\left(-K_{1}\exp\left(K_{2}\tau\right)\right)=\sum_{j=0}^{\infty}\frac{(-K_{1}\exp(K_{2}\tau))^{j}}{j!}\,.
    \label{eq:double_exp}
\end{equation}
Approximating the rightmost factor on the right hand side of the integral equation for $u_{\mathrm{Mi}}$ in Eq. \eqref{eq:int_u_MI} by a truncated series representation of a double exponential and linearising some of the exponential terms yields the following functional form for the evolution of the concentration of healthy prions during the Mi-phase
\begin{equation}
    u_{\mathrm{Mi}}(\tau)\approx(C_{3}+C_{4}\tau)\exp(-C_{5}\tau)+C_{6}\,,
    \label{eq:u_Mi}
\end{equation}
where $C_{3}$, $C_{4}$ and $C_{6}$ are arbitrary constants and $C_{5}$ is a positive constant. For realistic values of these constants, the concentration profile $u_{\mathrm{Mi}}(\tau)$ is positive throughout the time interval for which the above approximation is valid. In words, this approximation is given by the quotient between a linear and an exponential function plus a constant and its mathematical properties allows it to capture the characteristic feature of the concentration profile of the healthy form during the Mi-phase, namely that it attains a maximum value $u_{\max}$. Specifically, this maximum value is allowed by the fact that the linear function increases faster than the exponential function initially while the converse is true at later points in time. In fact, our approximation describes two interesting and qualitatively distinct cases of dynamics during the Mi-phase.

 In the first case, the approximation for $u_{\mathrm{Mi}}(\tau)$ is a concave function of time reminiscent of the epidemiological SIR model by McKendrick and Kermack~\cite{kermack1927contribution}. Under these circumstances, the linear constants $C_{3},C_{4}$ are positive and then the maximum value $u_{\max}$ is attained at time $\tau=\tau_{\max}$ defined by $\left.\dot{u}_{\mathrm{Mi}}\right|_{\tau=\tau_{\max}}=0$. In the second case, $u_{\mathrm{Mi}}$ behaves like a simple Verhulst model of population dynamics~\cite{verhulst1838logistic} meaning a logistic growth model. Technically, the curve $u_{\mathrm{Mi}}(\tau)$ approaches the carrying capacity $C_{6}$ which is a positive constant while the linear constants $C_{3},C_{4}$ are negative. Introducing new positive linear constants $\tilde{C}_{3}=-C_{3}$ and $\tilde{C}_{4}=-C_{4}$, we have that $u_{\mathrm{Mi}}(\tau)\approx{C}_{6}-(\tilde{C}_{3}+\tilde{C}_{4}\tau)\exp(-C_{5}\tau)$ and clearly the quotient between the linear and exponential functions decreases over time. Consequently, the maximum value of $u_{\mathrm{Mi}}$ in this case is given by $u_{\max}\approx\underset{{\tau\rightarrow+\infty}}{\lim}u_{\mathrm{Mi}}(\tau)\approx{C}_{6}$. To summarise both of these cases in one equation, the maximum concentration of the healthy form $u_{\max}$ is approximately given by
\begin{equation}
    u_{\max}\approx\max\left(\left[u_{\mathrm{Mi}}\left(\tau=\frac{C_{4}-C_{3}C_{5}}{C_{4}C_{5}}\right)\right],C_{6}\right)=\max\left(\left[\frac{C_{4}}{C_{5}}\exp\left(-\left(1-\frac{C_{3}C_{5}}{C_{4}}\right)\right)+C_{6}\right],C_{6}\right)\,.
    \label{eq:u_max}
\end{equation}
Critically, the functional form for $u_{\mathrm{Mi}}$ in Eq. \eqref{eq:u_Mi} fits numerical solutions $u$ of the original ODE system in Eqs. \eqref{eq:ODE_u} and \eqref{eq:ODE_v} during the Mi-phase strikingly well both in case of SIR-like dynamics (FIG. \ref{fig:Mi}A) as well as in case of dynamics reminiscent of logistic growth (FIG. \ref{fig:Mi}C). This demonstrates that our approximation of $u_{\mathrm{Mi}}(\tau)$ is flexible in the sense that it accounts for different types of dynamical behaviours. Moreover, it is the conversion function $f$ that determines which of these types of dynamics during the Mi-phase that the system undergoes.

\section{Discussion and conclusions}\label{sec:discussion}
By constructing and analysing a class of mechanistic models of prion-like replication, we suggest that the accumulation of toxic prions in mammalian cells over time is characterised by three phases referred to as HeMiTo-dynamics.
In the healthy phase, the concentration of healthy proteins normalises to physiological levels, while there is no formation of toxic protein.
Importantly, this phase is \textit{identical} for all types of conversion functions $f$ as Theorem \ref{thm:He} suggests, and the phase is dominated by the dynamics of healthy form. When interpreting the biological meaning of the He-phase, it is important to consider that we are assuming that the model parameters are in a regime where the TSS is stable. In this specific parameter regime, the toxic form will inevitably grow to a nonzero steady-state. Suppose the initial conditions are close to the physiological levels (the peak of the healthy concentration reached at the beginning of the Mi-phase). In that case, this phase is short-lived, and we should interpret the model as beginning at a time very close to disease initiation. In other words, the Mi-phase will begin abruptly. However, the He-phase may also represent the period in time in which the system parameters have changed, altering the levels of the healthy form prior to the production of the toxic form. In a biological setting, the heterodimer model may or may not exhibit a change in physiological levels of healthy protein before the production of toxic protein dependent on which model parameter induced the pathology (lowering of toxic clearance, higher production of healthy protein, or increased in the conversion of healthy to toxic form). As physiological levels of healthy protein remain mostly unchanged, future research should focus on mechanisms of disease initiation that do not alter healthy form concentration as predicted by the heterodimer model.

When a sufficiently high concentration of the healthy form is reached, the system enters the Mi-phase, where the healthy form reaches a maximum value while the toxic form increases exponentially. Here, the choice of conversion function $f$ determines the qualitative behaviour of the trajectories of the healthy form $u$ during the Mi-phase, and our asymptotic approximation $u_{\mathrm{Mi}}$ in Eq. \eqref{eq:u_Mi} indicates that there are two main types of dynamical behaviour during this phase. On the one hand, trajectories in the Mi-phase can behave like the SIR model where $u_{\mathrm{Mi}}(\tau)$ is a concave function which reaches a clear maximum point $u_{\max}$ before it decreases, and the reference model by Thompson et al.~\cite{thompson2020protein} defined by $f(v)=1.00$ has this type of SIR-like behaviour in the Mi-phase (FIG. \ref{fig:Mi}A). On the other hand, trajectories for the healthy form in the Mi-phase can also behave like a logistic growth function where the maximum value $u_{\max}$ corresponds to the carrying capacity of $u_{\mathrm{Mi}}(\tau)$, and, for instance, the model defined by $f(v)=0.57+v^{4}/(1+v^{5})$ has this type of dynamical behaviour (FIG. \ref{fig:Mi}C). Comparing these two cases, the formation of toxic prions, which occurs at the expense of the healthy form, is slower for the latter logistic growth type of dynamics compared to the SIR-like counterpart. This is also clear from the approximate system of ODEs in Eqs. \eqref{eq:ODE_U} and \eqref{eq:ODE_V} describing the dynamics of our prion models during the Mi-phase, as the conversion rate is essentially given by $f(v_{0})$, i.e. the value of the conversion function $f$ evaluated at the initial concentration of the toxic form $v_{0}$, which arises from the assumption that $f(v) \sim \mathcal{O}(v)$. In our simulations where $v_{0}=0.05$, it is clear that the value of $f(v_{0})$ is much lower for the function $f(v)=0.57+v^{4}/(1+v^{5})$ compared to the reference model $f(v)=1.00$ (FIG. \ref{fig:TSS}) and thus it is not surprising that the time to reach the TSS is much longer in the former case compared to the latter (FIG. \ref{fig:HeMiTo}). 

An interesting future line of research involves classifying prions in terms of specific conversion functions $f$. Using time series data of the concentration of toxic prions over time, our class of prion models defined by conversion functions $f$ can be used as a basis for model learning using neural networks. Ultimately, this would allow us to infer mechanistic models underlying experimental data of prion abundance over time, and potentially this can be used as a means to classify different prions in terms of conversion functions. Consequently, this work is a stepping stone towards improving our understanding of the fundamental workings of prions in the context of neurodegenerative diseases.


\section*{Comment on reproducibility}
To run the analyses presented in this work, a script is available at the public github-repositry associated with this project; \url{https://github.com/JohannesBorgqvist/HeMiTo}.
\section*{Acknowledgements}
The authors would like to thank Dr Alexander P. Browning and Dr Sam Palmer for reading the first draft of this manuscript. The authors would also like to thank Prof Philip Gerlee and Dr Adam Malik for fruitful discussions about perturbation analysis.

\section*{Funding}
JGB is funded by a grant from the Wenner-Gren foundations (Grant number: FT2023-0005).

\section*{CRediT author statement}
\textbf{JGB}: Conceptualization, Methodology, Visualization (made the figures), Writing - Original Draft, Writing - Review \& Editing, Formal analysis (derived the theorems and conducted all calculations), 
\textbf{CA}: Conceptualization, Writing - Review \& Editing.




\begin{appendices}
\begin{center}
    \Large \textbf{Appendices}\\
\end{center}
\section{Details behind the non-dimensionalisation of the class of prion models}\label{sec:non_dim}
We assume that the conversion function is dimensionless, and essentially we have that $f(T)=f(v)$. Multiplying the first ODE for $H$ in Eq. \eqref{eq:ODE_H} by $k_{3}/k_{4}$, we get
\begin{align*}
    \underset{=\mathrm{d}u/\mathrm{d}t}{\underbrace{\left(\frac{k_{3}}{k_{4}}\right)\dot{H}}}&=\left(\frac{k_{1}k_{3}}{k_{4}}\right)-k_{2}\underset{=u}{\underbrace{\left[\left(\frac{k_{3}}{k_{4}}\right)H\right]}}-\frac{k_{3}k_{4}}{k_{3}}\underset{=u}{\underbrace{\left[\left(\frac{k_{3}}{k_{4}}\right)H\right]}}\underset{=v}{\underbrace{\left[\left(\frac{k_{3}}{k_{4}}\right)T\right]}}\underset{=f(v)}{\underbrace{f(T)}}\\\Longrightarrow\frac{\mathrm{d}u}{\mathrm{d}t}&=\left(\frac{k_{1}k_{3}}{k_{4}}\right)-k_{2}u-k_{4}uvf(v)\,.
\end{align*}
Similarly, multiplying the second ODE for $T$ in Eq. \eqref{eq:ODE_T} by $k_{3}/k_{4}$, we get
\begin{align*}
    \underset{=\mathrm{d}v/\mathrm{d}t}{\underbrace{\left(\frac{k_{3}}{k_{4}}\right)\dot{T}}}&=\left(\frac{k_{3}k_{4}}{k_{3}}\right)\underset{=u}{\underbrace{\left[\left(\frac{k_{3}}{k_{4}}\right)H\right]}}\underset{=v}{\underbrace{\left[\left(\frac{k_{3}}{k_{4}}\right)T\right]}}\underset{=f(v)}{\underbrace{f(T)}}-k_{4}\underset{=v}{\underbrace{\left[\left(\frac{k_{3}}{k_{4}}\right)T\right]}}\\
    \Longrightarrow\frac{\mathrm{d}v}{\mathrm{d}t}&=k_{4}uvf(v)-k_{4}v\,.    
\end{align*}
Multiplying the ODE for $\mathrm{d}u/\mathrm{d}t$ by $k_{2}/(k_{1}k_{3}f(T_{2}^{\star}))$ yields
\begin{align*}
    \left(\frac{k_{2}}{k_{1}k_{3}f(T_{2}^{\star})}\right)\frac{\mathrm{d}u}{\mathrm{d}t}&=\left(\frac{k_{2}}{k_{1}k_{3}f(T_{2}^{\star})}\right)\left(\frac{k_{1}k_{3}}{k_{4}}\right)-\left(\frac{k_{2}}{k_{1}k_{3}f(T_{2}^{\star})}\right)k_{2}u-\left(\frac{k_{2}}{k_{1}k_{3}f(T_{2}^{\star})}\right)k_{4}uvf(v)\\
    \Longrightarrow\frac{\mathrm{d}u}{\mathrm{d}\tau}&=c_{1}-c_{2}u-\varepsilon{u}vf(v)\\
\end{align*}
where the parameters $c_{1}$ and $c_{2}$ are defined in Eq. \eqref{eq:parameters} and the perturbation parameter $\varepsilon$ is defined in Eq. \eqref{eq:eps}. Lastly, multiplying the ODE for $\mathrm{d}v/\mathrm{d}t$ by $k_{2}/(k_{1}k_{3}f(T_{2}^{\star}))$ yields
\begin{align*}
    \left(\frac{k_{2}}{k_{1}k_{3}f(T_{2}^{\star})}\right)\frac{\mathrm{d}v}{\mathrm{d}t}&=\left(\frac{k_{2}}{k_{1}k_{3}f(T_{2}^{\star})}\right)k_{4}uvf(v)-\left(\frac{k_{2}}{k_{1}k_{3}f(T_{2}^{\star})}\right)k_{4}v\Longrightarrow\frac{\mathrm{d}v}{\mathrm{d}\tau}=\varepsilon\left(uvf(v)-v\right)\,,\\
\end{align*}
which is the desired result.
\section{Proof of Theorem \ref{thm:SS}: existence of steady states}\label{sec:SS}
\begin{proof}
The steady states $(u^{\star},v^{\star})$ solve
\begin{align}
    0&=c_{1}-c_{2}u^{\star}-\varepsilon{u^{\star}}v^{\star}f(v^{\star})\,,\label{eq:SS_1}\\
    0&=\varepsilon{v^{\star}}\left({u^{\star}}f(v^{\star})-1\right)\,.\label{eq:SS_2_temp}
\end{align}
A solution of Eq. \eqref{eq:SS_2_temp} is given by $v_{1}^{\star}=0$, and substituting this into Eq. \eqref{eq:SS_1} and solving for $u_{1}^{\star}$ yields the HSS in Eq. \eqref{eq:HSS}. Another solution of Eq. \eqref{eq:SS_2_temp} is given by $u_{2}^{\star}=\left(f(v_{2}^{\star})\right)^{-1}$ and substituting this into Eq. \eqref{eq:SS_2_temp} and solving for $v_{2}^{\star}$ yields the TSS in Eq. \eqref{eq:TSS}. 
\end{proof}
\section{Proof of Proposition \ref{thm:saddle_HSS}: condition for the HSS being a saddle point}\label{sec:saddle_HSS}
\begin{proof}
    The Jacobian matrix $\mathcal{J}(u,v)$ is given by
    \begin{equation}
    \mathcal{J}(u,v)=\begin{pmatrix}-c_{2}-\varepsilon{v}f(v) & -\varepsilon{u}(f(v)+vf'(v))\\\varepsilon{v}f(v) & \varepsilon{u}(f(v)+vf'(v))-\varepsilon\end{pmatrix}\,,
        \label{eq:Jacobian}
    \end{equation}
    and its determinant by
    \begin{align}
        \mathrm{Det}(\mathcal{J}(u,v))&=\varepsilon(c_{2}+\varepsilon{v}f(v))-c_{2}\varepsilon{u}(f(v)+vf'(v))\,.\label{eq:det}   
    \end{align}
    Evaluating the determinant at the HSS in Eq. \eqref{eq:HSS} yields
\begin{align}
        \mathrm{Det}\left(\mathcal{J}\left(\frac{c_{1}}{c_{2}},0\right)\right)&=c_{2}\varepsilon\left(1-\frac{c_{1}}{c_{2}}f(0)\right)\,,\label{eq:det_HSS}   
    \end{align}    
    and the HSS is a saddle point when this determinant is negative which corresponds to the parameter condition in Eq. \eqref{eq:HSS_saddle}.
\end{proof}

\section{Proof of Theorem \ref{thm:stable_TSS}: conditions defining prion toxicity}\label{sec:stable_TSS}
\begin{proof}
Evaluating the determinant at the TSS in Eq. \eqref{eq:TSS} yields
\begin{align}
    \mathrm{Tr}\left(\mathcal{J}\left(u_{2}^{\star},v_{2}^{\star}\right)\right)=\mathrm{Tr}\left(\mathcal{J}\left(\frac{1}{f(v_{2}^{\star})},v_{2}^{\star}\right)\right)&=-c_{2}+\varepsilon{v}_{2}^{\star}\left(f'(v_{2}^{\star})-f(v_{2}^{\star})\right)\,,\label{eq:trace_TSS} \\
        \mathrm{Det}\left(\mathcal{J}\left(u_{2}^{\star},v_{2}^{\star}\right)\right)=\mathrm{Det}\left(\mathcal{J}\left(\frac{1}{f(v_{2}^{\star})},v_{2}^{\star}\right)\right)&=\varepsilon{v}_{2}^{\star}\left(f(v_{2}^{\star})-\frac{c_{2}}{f(v_{2}^{\star})}f'(v_{2}^{\star})\right)\,.\label{eq:det_TSS}  
    \end{align}  
    The TSS is a stable node if the trace in Eq. \eqref{eq:trace_TSS} is negative and the determinant in Eq.  \eqref{eq:det_TSS} is positive. These two requirements can be expressed in terms of the following inequalities
    \begin{align}
        \underset{>0}{\underbrace{\frac{c_{2}}{\varepsilon{v}_{2}^{\star}}+f(v_{2}^{\star})}}&>f'(v_{2}^{\star})\,,\label{eq:trace_TSS_2}\\
        \underset{>0}{\underbrace{\frac{1}{c_{2}}f(v_{2}^{\star})^{2}}}&>f'(v_{2}^{\star})\,,\label{eq:det_TSS_2}
    \end{align}
    and clearly both of these are satisfied whenever $f'(v_{2}^{\star})\leq{0}$.
\end{proof}

\section{Proof of Theorem \ref{thm:He}: approximate analytical solutions in the He-phase}\label{sec:He}
\begin{proof}
Substituting the perturbation ans\"{a}tze in Eqs. \eqref{eq:u_approx} and \eqref{eq:v_approx} into the system of ODEs in Eqs. \eqref{eq:ODE_u} and \eqref{eq:ODE_v} subject to the initial conditions $(u_{0},v_{0})$ in Eq. \eqref{eq:IC_u_and_v} and then extracting the $\mathcal{O}(1)$ terms yields the following system for the outer solutions
\begin{align}
    \dot{u}_{\mathrm{He}}&=c_{1}-c_{2}u_{\mathrm{He}}\,,\quad&{u}_{\mathrm{He}}(\tau=0)=u_{0}\,,\label{eq:ODE_u_He}\\
    \dot{v}_{\mathrm{He}}&=0\,,\quad&{v}_{\mathrm{He}}(\tau=0)=v_{0}\,.\label{eq:ODE_v_He}
\end{align}
The solutions of these equations are given by Eqs. \eqref{eq:u_He} and \eqref{eq:v_He}, respectively.
\end{proof}

\end{appendices}

\end{document}